\newtheorem{theorem}{Theorem}[section]
\newtheorem{corollary}[theorem]{Corollary}
\newtheorem{lemma}[theorem]{Lemma}
\newtheorem{example}[theorem]{Example} 
\newtheorem{definition}[theorem]{Definition}   
\newtheorem{remark}[theorem]{Remark}
\newcommand{\inn}[1]{\ensuremath{\textsf{in}(#1)}}
\newcommand{\outt}[1]{\ensuremath{\textsf{out}(#1)}}
\newcommand{\snc}[1]{\ensuremath{\textsf{sync}(#1)}}
\newcommand{\wt}[1]{\ensuremath{\textsf{wait}(#1)}}
\newcommand{\lf}[1]{\ensuremath{\textsf{lfree}(#1)}}
\newcommand{\cinn}[1]{\ensuremath{\textsf{cin}(#1)}}
\newcommand{\coutt}[1]{\ensuremath{\textsf{cout}(#1)}}
\newcommand{\complete}[1]{\ensuremath{\textsf{cmp}(#1)}}
\newcommand{\tcomplete}[1]{\ensuremath{\textsf{tcmp}(#1)}}
\newcommand{\dl}[1]{\ensuremath{\textsf{dlock}(#1)}}
\newcommand{\slfdl}[1]{\ensuremath{\textsf{sl}(#1)}}
\newcommand{\pslfl}[1]{\ensuremath{\textsf{psl}(#1)}}
\newcommand{\LF}{\textsf{LF}}
\newcommand{\SL}{\textsf{PSL}}
\newcommand{\CMP}{\textsf{CMP}}
\newcounter{ncomm}
\newcommand{\ignore}[1]{}
\newcommand{\oldProof}[1]{}
\title{Unlocking Blocked Communicating Processes}
\author{
Adrian Francalanza
\institute{CS, ICT,  University of Malta}
\and Marco Giunti
\institute{RELEASE,  DI, Universidade da Beira Interior} 
\institute{NOVA LINCS, DI-FCT, Universidade NOVA de Lisboa}
\and Ant\'onio Ravara 
\institute{NOVA LINCS,
  DI-FCT, Universidade NOVA de Lisboa}
}
\begin{document}

\maketitle

\begin{abstract}
  We study the problem of disentangling locked processes via 
  code refactoring. We identify and characterise a class of
  processes that is not lock-free; then we
  formalise an algorithm that statically detects potential locks and
  propose refactoring procedures that disentangle detected locks.  Our
  development is cast within a simple setting of a finite linear \ccs
  variant --- although it suffices to illustrate the main concepts, we
  also discuss how our work extends to other language
  extensions. 
\end{abstract}


\section{Introduction} 
\label{sec:introduction}
\paragraph{The scenario.}%
%
Concurrent programming is nowadays pervasive to most computational
systems and present in most software development processes.  In
particular, concurrent programming is prevalent in cloud platforms and
web-services, \ie inherently distributed systems that rely heavily on
message-based communication protocols. Unfortunately, this style of
programming is notoriously difficult and error-prone: concurrency bugs
appear frequently and have a substantial impact, as several recent
reports
show~\cite{DBLP:conf/dsn/FonsecaLSR10,DBLP:conf/asplos/LuPSZ08}.
Concurrency errors are hard to detect because not every execution
interleaving exhibits them, and this is further compounded by the
large number of possible execution scenarios. Automatic techniques and
tools are thus needed to analyse and ensure correct concurrent code.

One common form of bugs is that of (dead)locks~\cite{coffman1971}:
they arise when a computational entity holds exclusive access to a
resource without releasing it, while other entities 
wait to access that resource. In this work we characterise them in a
very simple model of concurrent computation, show how to statically
detect them, and in some cases, even show how to automatically solve
some of the (dead)locks.

\paragraph{Static analysis to the rescue.}%
Concurrency theory is a well-established discipline, providing
mathematical models of concurrent systems at various degrees of
expressiveness, (logical) languages to specify properties of such systems,
suites of verification techniques of both safety and
liveness properties, as well as tools to (automatically) analyse if
some property holds for a given specification.

We are interested in models centered around communication primitives
and synchronisation mechanisms, as these are the key characteristics
of a concurrent system. In particular, we are concerned with the
static verification of properties for these models, not only because
the approach analyses source code, but also because it is used
pre-deployment, in an automatic way. The models are useful to specify
and verify communication intensive systems and protocol
implementations; the static analysis is a light verification
technique 
that demands less from the user, as (s)he does not have to be an
expert in logic.

Concretely, herein we use the Calculus of Communicating Systems
(\ccs)~\cite{DBLP:books/sp/Milner80} and define a static analysis and
refactoring algorithm that is not only fully automatic, but also
working on ``pure'' source code, without further annotations or even types.

\paragraph{Behavioural types.} This field of study has gained momentum
recently by providing statically more than the usual safety
properties: (dead)lock-freedom or even progress can be statically
established by relying on a broad spectrum of analysis
techniques~\cite{CairesVieira10,CarbDM14,CarboneMontesi13,DBLP:conf/coordination/CoppoDPY13,GiachinoKL:deadlocksUnboundProcessNets,GiuntiR13,Kobayashi:new-type-system,Pad14,vieira.vasconcelos:typing-progress-communication-centred-systems}.
Despite their utility, such static detection techniques inevitably
approximate the solution (conservatively) --- since they are
``deciding 
undecidable properties'' --- and reject 
lock-free programs (give false positives). More importantly, however,
these techniques simply reject programs, without providing help as to
where the problem might be, or providing insights on how to solve the
problem detected.

\paragraph{Methodology.} Following the approach of Giunti and
Ravara~\cite{GiuntiR13}, in this paper we propose that such
techniques 
go a step further, and provide suggestions on how to fix a detected
bug, showing possible patches.
In particular, the work in \cite{GiuntiR13} focussed on resolving
self-holding (dead)locks, \ie when a thread holds the resources it
wants to use itself. In order to detect such errors, a local analysis
within one thread of computation, such as those discussed in
\cite{GiuntiR13}, sufficed.

By contrast, in this paper we investigate methods for resolving
circular-wait (dead)locks, \ie more general instances where concurrent
entities block one another by holding access to a subset of the
commonly requested resources.  Detecting such (dead)locks requires
analyses that spread across parallel threads of computation, and one
of the main challenges is to devise static techniques that are
compositional (thus scalable) \wrt the independent computing entities.
For this expository paper, we do not consider the full language of
\cite{GiuntiR13}.  Instead we distill minimal features of the language
--- namely synchronisation, prefixes, and parallel composition ---
that permit us to focus the core challenges to be addressed.  However,
the ultimate aim of the work is still to address circular-wait
(dead)locks in the full language of \cite{GiuntiR13}; in the
conclusion, we outline some of the additional issues that arise in the
full language setting.


\paragraph{Contributions.}%
\S~\ref{sec:language} briefly introduces 
a concise, yet sufficiently expressive, process language to rigorously
define (dead)locks.
In \S~\ref{sec:lock-freedom} we formalise the class of non lock-free
processes targeted by our work and give an alternative
characterisation for this class.
We present an algorithm for statically detecting processes in this
class in \S~\ref{sec:stat-detect-potent},
and in \S~\ref{sec:disent-potent-self} we describe disentangling
procedures for the detected processes. 
\S~\ref{sec:conclusion} concludes.  
We note that whereas \S~\ref{sec:lock-freedom} presents formal
results, \S~\ref{sec:stat-detect-potent} and
\S~\ref{sec:disent-potent-self} deal with ongoing work. In particular,
they present our general approach by formalising potential algorithms
for static analysis and resolution, and outlining the properties that
these algorithms are expected to satisfy.


\section{Language} 
\label{sec:language}
\begin{figure}[t]
  \centering
  \begin{align*}
     P,Q,R\in \GProc\bnfdef  &&
        \inert&& (\text{inert})  &
      \bnfsepp \prf{\act}P&& (\text{prefix}) & 
      \bnfsepp P\paral Q&& (\text{composition}) 
  \end{align*}
  \begin{align*}
    \ectxN & \bnfdef [-] \bnfsep P \paral \ectxN  \bnfsep   \ectxN \paral P   &&
                                                                               \quad\text{(Evaluation Contexts)} \\
    \ctxN & \bnfdef [-] \bnfsep P \paral \ctxN  \bnfsep   \ctxN \paral P \bnfsep \prf{\act}\ctxN  &&
                                                                                \quad \text{(Process Contexts)}\\
  \end{align*}
  \begin{mathpar}
    \rtit{sNil}\quad P \paral \inert \;\steq\; P \and
    \rtit{sCom}\quad P \paral Q \;\steq\; Q \paral P   \and 
    \rtit{sAss} \quad P\paral(Q \paral R) \;\steq\; (P\paral Q) \paral R   \\
  \end{mathpar}
  \begin{mathpar}
    \inference[\rtit{Com}]{}{\prf{a}P\paral\prf{\bar{a}}Q \reduc P\paral Q} 
    \quad
    \inference[\rtit{Ctx}]{P \reduc P'}{\ectx{P} \reduc \ectx{P'}} 
    \quad
    \inference[\rtit{Str}]{P\steq P'\reduc Q'\steq Q}{P\reduc Q} 
  \end{mathpar}
  \caption{The language (finite \ccs): syntax and operational semantics}
  \label{fig:language}
\end{figure}

We consider a very basic language. 
Assume a countable set $\Names$ of \emph{names}, ranged over by
$a,b,\ldots,$ 
and a disjoint countable set
$\overline\Names$ of \emph{co-names}, 
such that for every $a \in \Names$ there is a $\overline a \in
\overline\Names$; 
 the co-action operation is idempotent \ie 
$\bar{\bar{a}} = a$,
and let 
$\act,\actt \in \bigl(\Names \cup \overline{\Names}\bigr)$ denote actions.  

The grammar in \figref{fig:language} defines the syntax of the
language, a process algebra containing only prefixing and parallel
composition, together with action synchronisations akin to \ccs
\cite{DBLP:books/sp/Milner80}.
%
Let $\ctx{Q}$ (\resp $\ectx{Q}$) be the process obtained by
substituting the hole $[-]$ occurring in the context $\ctxN$ (\resp
$\ectxN$) with $Q$.

The semantics is standard, relying on a structural
equivalence relation~$\equiv$ (the smallest congruence including the
relation inductively generated by the rules below the grammar) and on
a reduction relation~$\reduc$, inductively generated by the
rules 
of \figref{fig:language}.
Let $\reducAst$ denote the reflexive and transitive closure of~$\rightarrow$.
%

Finally, assume henceforth a type system enforcing a linear use of
names, \tproc{\env}{P}, along the lines of the work of Kobayashi
\cite{kobayashi:type-systems}.
In well-typed processes, 
no name appears more than once with a given capability (input or
output), \ie a name occurs at most twice in a process, or none at all.
The set $\Proc$ is the subset of $\GProc$ induced by the typing
system~$\vdash$.


\section{Lock Freedom}
\label{sec:lock-freedom}
%
Our point of departure is \emph{lock-freedom}, as defined and studied
by Kobayashi and by Padovani~\cite{kobayashi:type-systems,Pad14}.

\begin{definition}[
Synchronisation predicates~\cite{Pad14}]\label{def:pad-basic-pred}
  \begin{align*}
    \inn{a,P} &\;\deftxt\; \exists P',P'' \cdot P\steq P'\paralS \prf{a}P''  & \outt{a,P} &\;\deftxt\; \exists P',P'' \cdot P\steq P'\paralS\prf{\bar{a}}P''\\
    \snc{a,P} & \;\deftxt\; \inn{a,P} \;\text{ and }\;\outt{a,P} & \wt{a,P} & \;\deftxt\; \inn{a,P} \;\text{ exor }\;\outt{a,P}    
  \end{align*}
 \end{definition}
  
\begin{definition}[Lock-Free~\cite{Pad14}]\label{def:lf} We define 
  $\LF \deftxt \sset{P\in \GProc \;|\; \lf{P}}$ where:
    \begin{align*}
    \lf{P} & \;\deftxt\; P \reduc^\ast Q \text{ and } \wt{a,Q} \quad \text{implies}\quad \exists R \cdot Q \reduc^\ast R  \text{ and } \snc{a,R}
  \end{align*}
\end{definition}


Following \defref{def:lf}, locked processes, $\Proc\setminus\LF$, are
those that \emph{never} provide the \resp co-action for some waiting
action. In the setting of \secref{sec:language}, this could be due to
either of two cases: $(i)$ the co-action is not present in the
process; $(ii)$ the co-action is present, but stuck underneath a
blocked prefix.
Whereas in the case of $(i)$, the context may unlock (\ie catalyse
\cite{CarbDM14}) the process by providing the necessary co-action, in
the case of $(ii)$ \emph{no} context can do so \emph{without violating
  the linear discipline} of the process.  Our work targets the
unblocking of this second class of locked processes, specifically by
\emph{refactoring} the prefixing of the existing process. To this aim,
we introduce the notion of a \emph{complete} process.


\begin{definition}[Complete Processes]\label{def:complete} 
  $ \Proc\supseteq\CMP  \deftxt \sset{P \;|\; \complete{P}}$    where:
 \begin{align*} 
      & \quad\complete{P}  \;\deftxt\; \forall a \cdot \bigl(\cinn{a,P} \text{ iff } \coutt{a,P} \bigr) \qquad \text{ and }\\
    \cinn{a,P} & \;\deftxt\; \exists \ctx{-},Q \cdot P \steq \ctx{Q} \text{ and } \inn{a,Q}      
   \quad\qquad\coutt{a,P}  \;\deftxt\; \exists \ctx{-},Q \cdot P \steq \ctx{Q} \text{ and } \outt{a,Q} 
    \end{align*}
 \end{definition}

 \begin{remark} In contrast to \inn{a,P} and \outt{a,P} of
   \defref{def:pad-basic-pred}, the predicates \cinn{a,P} and
   \coutt{a,P} of \defref{def:complete} consider actions \emph{under
     contexts} as well.
 \end{remark}
 
 \begin{example}\label{ex:complete} The process
   $P=\prf{a}{\prf{b}{\inert}} \paralS
   \prf{\bar{b}}{\prf{\bar{c}}{\inert}}$
   is \emph{not complete} since, \eg $\cinn{a,P}$ but \emph{not}
   $\coutt{a,P}$.  The process is also locked, but can be unlocked by
   the catalyser $[-]\paral\coprf{a}{\prf{c}{\inert}}$ without
   violating channel linearity \ie
   $[P]\paral\coprf{a}{\prf{c}{\inert}} \reduc^\ast \inert$.  The
   inert process, $\inert$ is clearly lock-free and complete.
   \begin{align*}
      P_1 & = \prf{a}{\prf{b}{\inert}} \paralS \prf{\bar{b}}{\prf{\bar{c}}{\inert}} \paralS \prf{c}{\prf{\bar{a}}{\inert}} 
     \qquad\qquad\qquad
     & P_2  & = \prf{d}{(\prf{a}{\prf{b}{\inert}} \paral \prf{\bar{b}}{\prf{\bar{c}}{\inert}})} \;\paralS\; \prf{\bar{d}}{\prf{c}{\prf{\bar{a}}{\inert}}}\\
     P_3  & = \prf{a}{\prf{\bar{a}}{\inert}}   &
     P_4  & = \prf{a}{(\prf{b}{\prf{\bar{a}}{\inert}} \paralS \prf{\bar{b}}{\inert})}
   \end{align*}
   By contrast, processes $P_1, P_2, P_3$ and $P_4$ (above) are both
   complete but \emph{not} lock-free.  Note
   that 
   we rule out complete processes such as
   $\prf{a}{\prf{\bar a}{\prf{a}{\inert}}}$ since they violate
   linearity and are thus 
   not typeable (see \S~\ref{sec:language}).  \exqed
 \end{example}

Our work targets the process class $\CMP \setminus \LF$.  In what
follows we provide a characterisation for this class that is easier
to work with.

\begin{definition}[Deadlock]\label{def:dl}
  \begin{math}
   \dl{P}  \;\deftxt\; \bigl(\not\exists Q \cdot P \reduc Q\bigr) \quad\text{and}\quad P \not\steq \inert 
  \end{math}
\end{definition}

\begin{definition}[Top-Complete]\label{def:top-complete} 
 \begin{align*}
     \tcomplete{P} & \;\deftxt\; (\inn{a,P} \text{ implies } \coutt{a,P}) \;\text{and}\; (\outt{a,P} \text{ implies } \cinn{a,P})  
    \end{align*}
 \end{definition}

\begin{definition}[Potentially Self-Locking]\label{def:psl}
$\SL \deftxt \sset{P\in\CMP \;|\; 
\pslfl{P}}$   
 where:
  \begin{align*}
    \slfdl{P} & \;\deftxt\; \dl{P} \text{ and } \tcomplete{P} \\
    \pslfl{P}  &\;\deftxt\; \exists \ectx{-},Q \cdot \bigl(P\reducAst \ectx{Q}  \text{ and } \slfdl{Q} \bigr)
  \end{align*}
\end{definition}

A self-deadlocked processes, \slfdl{P}, denotes a deadlocked process
that cannot be unlocked by a context without violating the linearity
discipline, since the \resp actions are already present in the
process, \ie \tcomplete{P}.  This, together with \dl{P}, also
guarantees that these \resp actions will \emph{never} be released.  A
potentially self-locking process, \pslfl{P}, contains an execution
that leads to a top-level sub-process, \ie $Q$ in \ectx{Q}, that is
self-deadlocked, \slfdl{Q}: \tcomplete{Q} then guarantees $Q$ cannot
interact with any of the future reductions of \ectx{-}.

\begin{example}
  \label{ex:psl} Recall the processes in \exref{ex:complete}. Process
  $P_1$ is self-locking, $\slfdl{P_1}$, and thus potentially
  self-locking as well, \pslfl{P_1}.  Although $P_2$ is \emph{not}
  self-locking, $\neg\slfdl{P_2}$ --- it is not deadlocked and can
  reduce by interacting on name $d$ --- it is \emph{potentially}
  self-locking, \pslfl{P_2}, since $P_2 \reduc P_1$.

  Both $P_3$ and $P_4$ are self-locking as well, but constitute
  instances of the self-holding deadlocked processes studied in
  \cite{GiuntiR13}: in both cases, the locked resource (port $a$) is
  blocked by the co-action under the prefix of the same process.  Such
  locks may be detected by a local analysis of the process prefixed by
  the blocked action.  By contrast, in order to determine \pslfl{P_1}
  and \pslfl{P_2}, the analysis needs to spread across parallel
  processes. \exqed
\end{example}

The main result of the section is that \SL\ characterises $\CMP \setminus \LF$.

\begin{theorem}
  \label{thm:PSL-eq-CMP-less-LF}
  \begin{math}
    \SL = \CMP \setminus \LF
  \end{math}
\end{theorem}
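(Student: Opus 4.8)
The plan is to prove the two set inclusions separately, after first isolating three auxiliary facts about the predicates. First I would show that completeness is \emph{preserved by reduction}: if $\complete{P}$ and $P \reduc P'$, then $\complete{P'}$. Indeed, a step consumes a matching input/output pair on some name $c$ via \rtit{Com}; by linearity $c$ occurs exactly twice in $P$, so after the step $c$ occurs nowhere and the ``$\cinn{c,P'}$ iff $\coutt{c,P'}$'' clause still holds vacuously, while the occurrences of every other name are untouched. Second, I would note that $\complete{P}$ entails $\tcomplete{P}$: a top-level input is in particular an input under a context, so $\inn{a,P}$ implies $\cinn{a,P}$, which by completeness gives $\coutt{a,P}$, and dually for outputs. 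Third, since the calculus is finite and each \rtit{Com} strictly decreases the number of prefixes, reduction is strongly normalising, so every process reduces to a $\reduc$-normal form.

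For $\CMP \setminus \LF \subseteq \SL$, take $P$ complete but not lock-free. Unfolding the negation of $\lf{P}$ yields $Q$ with $P \reducAst Q$, $\wt{a,Q}$ for some $a$ --- say $\inn{a,Q}$ and $\neg\outt{a,Q}$ --- and no reduct $R$ of $Q$ with $\snc{a,R}$. The top-level prefix $\prf{a}{\cdot}$ can never fire, since firing it would require a top-level co-action, i.e.\ $\snc{a,\cdot}$ at that point, which is excluded; hence it persists through every reduction of $Q$. Reducing $Q$ to a normal form $Q'$ (which exists by strong normalisation) therefore gives a process still containing this prefix, so $Q' \not\steq \inert$ and $Q' \not\reduc$, that is $\dl{Q'}$. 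By preservation of completeness $\complete{Q'}$, hence $\tcomplete{Q'}$, so $\slfdl{Q'}$. Taking the trivial (empty) evaluation context witnesses $\pslfl{P}$, and since $P \in \CMP$ we conclude $P \in \SL$.

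For $\SL \subseteq \CMP \setminus \LF$, take $P \in \SL$, so $P \in \CMP$ and $P \reducAst \ectx{Q}$ with $\dl{Q}$ and $\tcomplete{Q}$; it remains to show $P \notin \LF$. As $Q$ is deadlocked and $Q \not\steq \inert$, it has a top-level prefix, WLOG on $a$ with $\inn{a,Q}$; deadlock forbids $\outt{a,Q}$ (else \rtit{Com} applies), so $\wt{a,Q}$, and $\tcomplete{Q}$ gives $\coutt{a,Q}$, an output on $a$ buried under a prefix of $Q$. The crucial observation is that $Q$ cannot interact with its surrounding context: for every top-level action $b$ of $Q$ the same reasoning buries its co-action inside $Q$, so both occurrences of $b$ lie in $Q$ and, by linearity, $b$ is absent from the context. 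Writing $\ectx{Q} \steq Q \paralS D$, it follows that no reduction of $\ectx{Q}$ ever involves $Q$, so every reduct has the form $Q \paralS D'$ with $D \reducAst D'$ and $Q$ unchanged. In particular $\wt{a,\ectx{Q}}$, and every reduct $R$ keeps the top-level input $\prf{a}{\cdot}$ with its co-action still buried and $a$ absent from $D'$, whence $\neg\snc{a,R}$. Thus $\ectx{Q}$ witnesses the failure of $\lf{P}$, giving $P \in \CMP \setminus \LF$.

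I expect the main obstacle to be this non-interference argument in the second inclusion: making rigorous that a self-deadlocked sub-process $Q$ is inert with respect to its evaluation context, so that it survives every reduction of $\ectx{Q}$. This is precisely where the three ingredients of self-deadlock conspire --- deadlock forces each co-action to be buried rather than top-level, top-completeness guarantees the co-action is present at all, and linearity then pins both occurrences inside $Q$, leaving the context with no name on which to synchronise. I would state this persistence as a separate lemma (reduction of $\ectx{Q}$ factors through reductions of the context, keeping $Q$ fixed) and prove it by induction on the length of the reduction, appealing at each step to preservation of the linear discipline.
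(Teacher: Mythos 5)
Your proof is correct and follows essentially the same route as the paper's: the same two inclusions, supported by the same auxiliary facts (preservation of completeness under reduction, completeness entailing top-completeness, and termination of reduction), with the paper's persistence lemma playing the role of your ``the waiting prefix can never fire'' argument in the first inclusion. If anything, your treatment of the second inclusion is more rigorous than the paper's own proof, which concludes from $\wt{a,Q}$ and $\neg\snc{a,Q}$ that $P\notin\LF$ without spelling out the non-interference step --- that deadlock plus top-completeness buries every co-action of $Q$'s top-level actions inside $Q$ itself, and linearity then excludes them from the evaluation context, so $Q$ survives all reductions of $\ectx{Q}$ --- which you correctly identify as the crux and propose to isolate as a separate lemma proved by induction on the reduction sequence.
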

\begin{proof}
  See \secref{sec:proofs}.
\end{proof}

\section{Static Detection for Potentially Self-Locking Processes} 
\label{sec:stat-detect-potent}
\begin{figure}[t]
  \textbf{Environment Operations}
  \begin{mathpar}
     \inference{}{\env + \emptyset = \env} \quad
    \inference{\env_1+\env_2 = \env_3 \quad a \not\in\dom{\env_2}}{(\env_1,a\!:\!\prV) + \env_2 = \env_3, a\!:\!\prV} \quad
    \inference{\env_1+\env_2 = \env_3}{\env_1,a\!:\!\prV + \env_2,a\!:\!\prVV = \env_3, a\!:\!\prV+\prVV} 
    
  \end{mathpar}
\textbf{Layered Environments and Verdicts, and Operations}
  \begin{align*}
    \envv \in\LEnv&\bnfdef \epsilon \bnfsepp \env;\envv & \verV \in\Verd&\bnfdef \envv \bnfsepp \verdl 
  \end{align*}
    \begin{mathpar}
      \inference{}{\envv + \epsilon = \envv} \quad
      \inference{\envv_1 + \envv_2 = \envv_3}{\env_1;\envv_1 + \env_2;\envv_2 = (\env_1+\env_2);\envv_3}  \quad
      \inference{}{|\epsilon| = \emptyset} \quad 
      \inference{|\envv| = \env'}{|\env;\envv| = \env + \env'}
    \end{mathpar}
    \begin{align*}
      \verPrf{\env}{\verV} & \deftxt
                             \begin{cases}
                               \verdl
                               & \text{ if } \verV=\verdl  \text{ or } 
                               \bigl(\verV=\envv  \text{ and }
                               \dl{\env} \text{ and }\overline{\env} \subseteq \flatt{\envv}\bigr)\\
                               \env;\verV & \text{ otherwise}
                             \end{cases}\\
      \verMrg{\verV_1}{\verV_2} & \deftxt
                               \begin{cases}
                                 \verdl & \text{ if } \verV_1=\verdl \text{ or } \verV_2=\verdl\\
                                  \verdl   & \text{ if }
                                    \verV_1=\env_1;\!\envv_1, 
                                     \verV_2=\env_2;\!\envv_2,
                                  \dl{\env_1 \!+\! \env_2} \text{ and }\overline{\env_1 \!+\! \env_2} \subseteq \flatt{\envv_1\!+\!\envv_2} 
\\
                                     \verMrg{\envv_1}{\envv_2}
                                   & \text{ if }
                                     \verV_1=\env_1;\envv_1, 
                                     \verV_2=\env_2;\envv_2
                                     \text{ and } \complete{\env_1 +  \env_2}
\\
                                   \verV_1 + \verV_2 & \text{ otherwise (since } \verV_1=\envv_1, \verV_2=\envv_2\text{)} 
                               \end{cases}
    \end{align*}
    \textbf{Compositional Static Analysis Rules}
     \begin{mathpar}
  \inference[\rtit{dNil}]{}{\pslseq{\inert}{\emptyset} }  
  \quad
  \inference[\rtit{dIn}]{\pslseq{P}{\verV}}{\pslseq{\prf{a}{P}}{(\verPrf{a:\mi}{\!\verV})}}
  \quad
  \inference[\rtit{dOut}]{\pslseq{P}{\verV}}{\pslseq{\prf{\bar{a}}{P}}{(\verPrf{a:\mo}{\!\verV})}}  
  \quad
  \inference[\rtit{dPar}]{\pslseq{P_1}{\verV_1} & \pslseq{P_2}{\verV_2}  }{\pslseq{P_1\paral P_2}{\verV_1 \oplus \verV_2}} 
   \end{mathpar}
  \caption{Static Analysis for Potential Self-Deadlock}
  \label{fig:static-analysis}
\end{figure}

We devise an algorithm for detecting potentially self-locking
processes. To be scalable, the algorithm is \emph{compositional}.
The intuition behind 
is that of constructing \emph{layers} of permission environments
$\env_1;\ldots;\env_n$, \emph{approximating} the prefixing found in the
process being analysed, and then checking whether this structure
satisfies the two conditions defining self-deadlock (see \slfdl{-} in
\defref{def:psl}), namely that the top environment $\env_1$ represents
a \emph{deadlock} and that the layered structure is, in some sense,
\emph{top-complete}.

\begin{example}\label{ex:psl-alg-overview} We 
  determine that the process $P_1$ from \exref{ex:complete} is
  (potentially) self-locking
  by constructing the list of layered environments 
  \begin{equation*}
    \underbrace{(a:\mi,b:\mo,c:\mi)}_{\env_1} \; ;\; \underbrace{(a:\mo,b:\mi,c:\mo)}_{\env_2}\; ; \; \epsilon
  \end{equation*}
  and checking that
  \begin{itemize}
  \item the top environment $\env_1$ does not contain any matching
    permissions, \ie $\mio\not\in\cod{\env_1}$ --- this implies that
    the (composite) process is deadlocked;
  \item that \emph{all} the \resp dual permissions are in $\env_2$ ---
    this implies that the (composite) process is blocking itself and
    cannot be unblocked by an external process composed in parallel
    with it.
  \end{itemize}
  The main challenge of our compositional analysis is to detect
  eventual self-deadlocks in cases when the constituent are dispersed
  across a number of parallel processes.
  In the case of $P_2$ of \exref{ex:complete}, we need to analyse the
  parallel (sub) processes
  $\prf{d}{(\prf{a}{\prf{b}{\inert}} \paral
    \prf{\bar{b}}{\prf{\bar{c}}{\inert}})}$
  and \prf{\bar{d}}{\prf{c}{\prf{\bar{a}}{\inert}}} \emph{in
    isolation}, and then determine the eventual deadlock once we merge
  the sub-analyses; recall, from \exref{ex:psl}, that
  $P_2$ 
  reduces to $P_1$ 
  from the respective continuations prefixed by $d$ and
  $\bar{d}$. \exqed
\end{example}

Formally, permissions, $\prV,\prVV\in\sset{\mi,\mo,\mio}$, denote
\resp input, output and input-output capabilities. The merge,
$\prV+\prVV$, and complement, $\overline{\prV}$, (partial) operations
are defined as:
\begin{equation*}
  \mi + \mo \deftxt \mio \qquad \qquad\qquad\qquad\overline{\mi}\deftxt\mo\qquad\overline{\mo}\deftxt\mi\qquad \overline{\mio}\deftxt\mio
\end{equation*}
Environments, \env, are partial maps from names to
permissions.  
We assume the following overloaded notation: complementation,
$\overline{\env}$, 
inverts the respective permissions in $\cod{\env}$ whereas deadlock
and complete predicates are defined as:
\begin{equation*}
  \dl{\env} \deftxt \cod{\env} = \sset{\mi,\mo} \qquad \qquad\complete{\env} \deftxt \cod{\env} = \sset{\mio} 
\end{equation*}
The rules in \figref{fig:static-analysis} define the merge operation
over environments, $\env_1+\env_2$ (we elide symmetric rules).
Layered environments, \envv, are lists of environments.  Our static
analysis sequents take the form ${\pslseq{P}{\verV}}$ where \verV\ is
a \emph{verdict}: it can either be a layered environment or \verdl,
denoting a detection.  Layered environments may be merged,
$\envv_1+\envv_2$, or flattened into a single environment, $|\envv|$;
see \figref{fig:static-analysis}.

The static analysis rules are given in \figref{fig:static-analysis},
and rely on two verdict operations.  Prefixing,
$\verPrf{\env\!}{\!\verV}$, collapses to the definite verdict \verdl\
if \verV\ was definite or else \env is deadlocked, \dl{\env}, and
top-complete, 
$\overline{\env} \subseteq \flatt{\envv}$, but creates an
\emph{extended} layered environment
otherwise. 
Verdict merging, $\verMrg{\verV_1}{\verV_2}$, collapses to \verdl\ if
either subverdict is a definite detection, or the combined top
environments, $\env_1 + \env_2$, satisfy environment deadlock and
top-completeness; if $\env_1 + \env_2$ is complete, then it is safe to
discard it and check for self-deadlock in the sub-layers (see
\remref{rem:unsound}) otherwise the verdicts (which must both be
layered environments) are 
merged.

\begin{example}\label{ex:psl-alg-deriv} Recall process $P_2$ from \exref{ex:psl-alg-overview}.  We can derive the sequents:
  \begin{align} 
    \label{eq:3}
    \prf{d}{(\prf{a}{\prf{b}{\inert}} \paral \prf{\bar{b}}{\prf{\bar{c}}{\inert}})} & \;\; \triangleright \;(\;d:\mi)\,;\, (a:\mi,b:\mo)\,;\, (b:\mi,c:\mi)\,;\, \epsilon \\ 
    \label{eq:4}
    \prf{\bar{d}}{\prf{c}{\prf{\bar{a}}{\inert}}}& \;\;\triangleright\;\;(d:\mo)\,;\, (c:\mi)\,;\, (a:\mo)\,;\,\epsilon
  \end{align}
  For instance, in the case of \eqref{eq:3}, we first derive the
  judgements
  $\pslseq{\prf{a}{\prf{b}{\inert}}\;}{\;a:\mi; b:\mi; \epsilon}$ and
  $\pslseq{\prf{\bar{b}}{\prf{\bar{c}}{\inert}}\;}{\;b:\mo;c:\mo;\epsilon}$
  using rules \rtit{dNil}, \rtit{dIn} and \rtit{dOut}.  Applying
  \rtit{dPar} on these two judgements requires us to calculate
  \begin{equation*}
    \verMrg{(a:\mi; b:\mi; \epsilon)}{(b:\mo;c:\mo;\epsilon)} \; = \; {(a:\mi; b:\mi; \epsilon)} + {(b:\mo;c:\mo;\epsilon)} \;=\; (a:\mi,b:\mo); (b:\mi,c:\mi); \epsilon
  \end{equation*}
  using the definition of \verMrg{\verV_1}{\verV_2} from
  \figref{fig:static-analysis}.  We thus obtain \eqref{eq:3} by
  applying \rtit{dIn} on the resultant judgement.

  Importantly, when we use rule \rtit{dPar} again, this time to merge
  judgements \eqref{eq:3} and \eqref{eq:4}, the definition of
  \verMrg{\verV_1}{\verV_2} allows us to reexamine the environments in
  the sub-layers, since the merged top-layer is complete,
  $\complete{d\!:\mi\!+ d\!:\mo}$, from which we infer that the top
  actions guarding the merged parallel processes will safely interact
  and release the processes in the sub-layers.  Stated otherwise, we
  obtain:
  \begin{equation*}
    \verMrg{\bigl(d:\mi; (a:\mi,b:\mo); (b:\mi,c:\mi);\epsilon\bigr)}{\bigl(d:\mo; c:\mi; a:\mo;\epsilon\bigr)} \;=\; \verMrg{\bigl((a:\mi,b:\mo); (b:\mi,c:\mi);\epsilon\bigr)}{\bigl(c:\mi; a:\mo;\epsilon\bigr)} \;=\; \verdl
  \end{equation*}
  since $\dl{(a:\mi,b:\mo) + (c:\mi)}$ and
  $\;\overline{a:\mi,b:\mo,c:\mi} \subseteq
  \flatt{\bigl((b:\mi,c:\mi);\epsilon\bigr)+
    \bigl(a:\mo;\epsilon\bigr) } = (b:\mi,c:\mi,a:\mo)$. \exqed
\end{example}

\begin{remark} \label{rem:unsound} When merging verdicts, it is unsafe
  to ignore individual complete mappings, \eg $a:\mio$, even though
  this makes the analysis imprecise. It is only safe to ignore them
  (and check for potential deadlocks in lower layers) when the
  \emph{entire} top environment is complete, \ie \complete{\env}.  As
  a counter-example justifying this, 
  consider the lock-free process
  \begin{math}
    (\prf{a}{\coprf{b}{\inert}} \paral \prf{b}{\inert}) \paral \coprf{a}{\inert}
  \end{math}.
   We currently deduce
   \begin{align*}
     \pslseq{(\prf{a}{\coprf{b}{\inert}} \paral \prf{b}{\inert})\;}{\;(a\!:\mi,b\!:\mi);b\!:\mo;\epsilon} \;\quad\text{and}\quad\;
     \pslseq{\coprf{a}{\inert}\;}{\;a\!:\mo;\epsilon} \;\quad 
     \text{ where }\;\quad \overline{a\!:\mi,b\!:\mi + a\!:\mo} = a\!:\mio,b\!:\mo \not\subseteq b\!:\mo
   \end{align*}
   However, eliding $a\!:\mio$ from the analysis, \ie assuming that
   $(a\!:\mi,b\!:\mi) + a\!:\mo \;=\; b\!:\mo$, yields an unsound
   detection.  Precisely, when merging the sub-verdicts for rule
   \rtit{dPar},
   \verMrg{\bigl((a\!:\mi,b\!:\mi);b\!:\mo;\epsilon\bigr)}{\bigl(a\!:\mo;\epsilon\bigr)},
   we would first obtain $\dl{(a\!:\mi,b\!:\mi) + a\!:\mo}$ and
   moreover that $\overline{(a\!:\mi,b\!:\mi) + a\!:\mo} = b\!:\mo$ is
   a subset of ${\flatt{(b\!:\mo;\epsilon) + \epsilon}} = b\!:\mo$,
   which yields \verdl according to
   \figref{fig:static-analysis}. \exqed
\end{remark}
\smallskip
\noindent
We expect the judgement \pslseq{P}{\verdl} to imply \pslfl{P}, which
would in turn imply $\neg\lf{P}$ by
\thmref{thm:PSL-eq-CMP-less-LF}. We leave the proof of the first
implication for future work.




\section{Disentangling Potentially Self-Locking Processes} 
\label{sec:disent-potent-self}
\begin{figure}[t]
  \begin{align*}
    \dentO{\env\!}{\!\inert} &\deftxt \inert & 
     \dentO{\env\!}{\!P\paral Q} &\deftxt (\dentO{\env\!}{\!P}) \paral (\dentO{\env\!}{\!Q})\\
    \dentO{\env\!}{\!\coprf{a}{P}} & \deftxt
                                 \begin{cases}
                                   \coprf{a}{\inert}\paral P &   \env(a) = \mo\\
                                   \coprf{a}{(\dentO{\env\!}{\!P})}  & \text{otherwise}
                                 \end{cases}  &
    \dentO{\env\!}{\!\prf{a}{P}} & \deftxt
                                 \begin{cases}
                                   \prf{a}{\inert}\paral P &   \env(a) = \mi\\
                                   \prf{a}{(\dentO{\env\!}{\!P})}  & \text{otherwise}
                                 \end{cases}\\[0.5em]
    \dentT{\env\!}{\!\inert} &\deftxt \inert & 
     \dentT{\env\!}{\!P\paral Q} &\deftxt (\dentT{\env\!}{\!P}) \paral (\dentT{\env\!}{\!Q})\\
    \dentT{\env\!}{\!\coprf{a}{P}} & \deftxt
                                 \begin{cases}
                                   \coprf{a}{\inert}\paral (\dentT{\env\!}{\!P}) &   \env(a) = \mo\\
                                   (\dentT{\env\!}{\!P}) &   \env(a) = \mi\\
                                   \coprf{a}{(\dentT{\env\!}{\!P})}  & \text{otherwise}
                                 \end{cases}  &
    \dentT{\env\!}{\!\prf{a}{P}} & \deftxt
                                 \begin{cases}
                                   \prf{a}{(\dentT{\env\!}{\!P})} \paral \coprf{a}{\inert} &   \env(a) = \mi\\
                                   \prf{a}{(\dentT{\env\!}{\!P})}  & \text{otherwise}
                                 \end{cases}
  \end{align*}
  \caption{Disentangling for Potential Self-Deadlock}
  \label{fig:disentangling-translation}
\end{figure}

To illustrate the ultimate aim of our study, we outline
possible disentangling functions that refactor a potentially
self-deadlocked process into a corresponding lock-free process.  These
disentangling functions are meant to be used in conjunction with the
detection algorithm of \secref{sec:stat-detect-potent} as a static
analysis tool for automating the disentangling of processes.
There are a number of requirements that a disentangling algorithm
should satisfy.  For instance, it should not violate any safety
property that is already satisfied by the entangled process (\eg if an
entangled process $P$ type-checked according to some typing discipline,
\ie \tproc{\env}{P} , the resulting disentangled processes, say $Q$,
should still typecheck \wrt the same type discipline/environment, \ie
\tproc{\env}{Q}).  Additionally, one would also expect the resultant
disentangled process to be lock-free, as expressed in \defref{def:lf},
or at the very least to resolve a subset of the locks detected.  But
there are also a number of additional 
possibilities for what constitutes a valid process disentangling.
Within the simple language of \secref{sec:language}, we can already
identify at least two (potentially conflicting) criteria:
\begin{enumerate}
\item the order of name usage respects that dictated by the innermost
  prefixing of the entangled process.  Stated otherwise, any locks are
  assumed to be caused by prefixing at the top-level of the process.
\item the order of input prefixes in the entangled process should be preserved.
\end{enumerate}
We envisage a straightforward extension to the system
\pslseq{P}{\verV} of \secref{sec:stat-detect-potent}, with extended
detection reports, $\langle\verdl,\env\rangle$. The tuple
$\langle\verdl,\env\rangle$, in some sense, \emph{explains} the source
of the problem detected by including the offending top-layer
environment of a self-deadlock, \env; this information is then used by
the disentangling procedure to refactor the detected process.

\figref{fig:disentangling-translation} defines two disentangling
functions that take this (top-layer) environment and the \resp
detected process as input, and return a refactored process as output.
The first function, \dentO{\env\!}{\!P}, translates problematic
prefixing (as dictated by \env) into parallel compositions. The second
function \dentT{\env\!}{\!P} operates asymmetrically on input and
output prefixes: whereas problematic outputs are treated as before,
blocked inputs are \emph{not} parallelised; instead the \resp output
is \emph{pulled out} at input level.

\begin{example}
  \label{ex:disentag}
  The algorithm of \secref{sec:stat-detect-potent} could detect $P_5$
  (below) as \pslseq{P_5\:}{\;\langle\verdl,(a:\mi,c:\mo)\rangle},
  where $\env = (a:\mi,c:\mo)$
  \begin{equation*}
    P_5=\prf{a}{\coprf{b}{\prf{c}{\inert}}}\;\paralS\;\coprf{c}{\prf{b}{\coprf{a}{\inert}}}
  \end{equation*}
  Using the offending top-layer environment $\env$, we can apply the
  two disentangling algorithms of
  \figref{fig:disentangling-translation} and obtain the following:
  \begin{align}
    \label{eq:15}
    \dentO{\env\!}{\!P_5} & = (\prf{a}{\inert} \paral \coprf{b}{\prf{c}{\inert}})\paral(\coprf{c}{\inert}\paral\prf{b}{\coprf{a}{\inert}}) \\
    \label{eq:18}
    \quad\dentT{\env\!}{\!P_5} & = (\coprf{a}{\inert} \paral \prf{a}{\coprf{b}{\prf{c}{\inert}}})\paral(\coprf{c}{\inert}\paral\prf{b}{\inert}) 
  \end{align}
  While both refactored processes are lock-free, it turns out that the
  first disentangling function observes the first criteria: in the
  refactored process,~\eqref{eq:15}, interactions on $a$ and $c$
  happen \emph{after} interactions on $b$, since these names are
  (both) prefixed by $b$ (and $\bar{b}$) at the innermost level of
  $P_5$.  Conversely, the second disentangling function observes the
  second criteria discussed above: in the refactored process,
  \eqref{eq:18}, the input prefixing that orders $a$ before $c$ in
  $P_5$ is preserved (this was not the case in \eqref{eq:15}).  Note
  that both refactorings preserve channel linearity (a safety
  criteria) while returning lock-free processes.\exqed
\end{example}




\section{Conclusion} 
\label{sec:conclusion}
We have outlined our strategy for automating correct disentangling of
locked
processes, 
generalising preliminary results previously presented~\cite{GiuntiR13}.
Although we limited our discussion to a very simple language --- the
variant of the finite \ccs\ without recursion, choice or name scoping
--- this was expressive enough to focus on the usefulness of the
concepts and techniques we propose, \ie resolving circular locks
across parallel compositions. We define precisely the class of
(dead)locked processes within this setting, and provide a faithful
characterisation of them in terms of a novel notion: potentially
self-locking processes.
We also devised a compositional algorithm to statically detect these
processes and unlock them, improving previous results
(cf.~\cite{GiuntiR13}).
In particular, Giunti and Ravara~\cite{GiuntiR13} used a different
technique 
(based on balanced session types) and could only disentangle
self-holding deadlocks such as those in processes $P_3$ and $P_4$ of
\exref{ex:complete}. 
The technique does not support reasoning about (and disentangle) locks
across parallel compositions, such as those shown for processes $P_1$
and $P_2$ of \exref{ex:complete} and $P_5$
of 
\exref{ex:disentag}.


We expect the concepts and techniques developed to carry
over 
to more expressive languages. We 
are considering language extensions such as process recursion,
unrestricted channel names (to allow non-determinism), and
value-passing. For instance, disentangling the value-passing
program (an extension of the process $P_5$ in \exref{ex:disentag})
$
        P_6=\prf{a(x)}{\prf{\bar{b}\langle
            x+1\rangle}{\prf{c(y)}{\inert}}}
        \;\paralS\;
        \prf{\bar{c}\langle 5\rangle}{\prf{b(z)}{\prf{\bar{a}{\langle7\rangle}}{\inert}}}
$
may not be possible for certain disentangling functions (and criteria)
\eg $\dentO{\env\!}{\!P}$, whereas others may require auxiliary
machinery, \eg the $\textbf{findVal}(-)$ function used by
\cite{GiuntiR13} for pulling out the \resp output values in
$\dentT{\env\!}{\!P}$; in (complete) linear settings, there is a
unique output for any particular channel, which can be obtained
through a linear scan of the process. The input binding structure may
also make certain processes impossible to disentangle. \Eg consider a
modification in $P_6$ above 
where the value $7$ is changed to the bound value $z$. This would
create a circular binding dependency: one between the input on channel
$a$ and the output on $b$ through variable $x$, but also another one
between the input on $b$ and the output on $a$ through variable $z$.
These issues 
will all be considered in future work.


\section*{Acknowledgements}
Adrian Francalanza was supported by the grants
ECOST-STSM-IC1201-280114-038254
and
ECOST-STSM-IC1201-250115-054509.
Marco Giunti was supported by the grant
ECOST-STSM-IC1201-220713-032903 and by the Software Testing Center,
Centro de Neg\'ocios e Servi\c cos Partilhados do Fund\~ao.
Ant\'onio Ravara was supported by the grant
ECOST-STSM-IC1201-210713-033367.
Marco Giunti and Ant\'onio Ravara were also supported by the
grant FCT/MEC NOVA LINCS PEst UID/CEC/04516/2013.

\bibliography{sessions}  
\bibliographystyle{eptcs} 

\newpage
\appendix


\section{Proofs}
\label{sec:proofs}

This section is devoted to the proof of Theorem~\ref{thm:PSL-eq-CMP-less-LF}. We start with some 
auxiliary definitions and lemmas.
\smallskip

Given a  process~$P$ of Figure~\ref{fig:language}, we indicate with $\Names(P)$ the subset of $
\Names$ induced by the rule
$\Names(a.P) = \{a\}\cup\Names(P)$: the remaining cases are homomorphic. 
We use $\sqcup$ for disjoint union of sets.

We remember that we assume that the processes~$P$ of our interest are linear, that is they never 
contain two or more inputs or  outputs on the same channel, and deploy the following results.


\begin{lemma}\label{lem:sync}
If $P\rightarrow P'$ then there is port, $a$, such that 
$\Names(P')\sqcup\{a\} =\Names(P)$ and  $ \snc{a,P}$.
\end{lemma}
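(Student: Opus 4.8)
The plan is to proceed by induction on the derivation of the reduction $P \reduc P'$, following the three rules \rtit{Com}, \rtit{Ctx} and \rtit{Str} that generate $\reduc$. Before entering the cases, I would record three routine auxiliary facts used throughout. First, structural equivalence preserves the name set: $P \steq Q$ implies $\Names(P) = \Names(Q)$, since each of \rtit{sNil}, \rtit{sCom}, \rtit{sAss} leaves the underlying collection of ports unchanged (using $\Names(\inert) = \emptyset$). Second, the synchronisation predicate is stable under $\steq$, so $\snc{a,P}$ together with $P \steq Q$ gives $\snc{a,Q}$; this is immediate because $\inn{a,\cdot}$ and $\outt{a,\cdot}$ are themselves defined up to $\steq$ in \defref{def:pad-basic-pred}. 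Third, for an evaluation context $\ectxN$ one has $\Names(\ectx{X}) = N \cup \Names(X)$ for a fixed port set $N$ collecting the names of the parallel components of $\ectxN$, because the hole of an evaluation context never sits under a prefix.

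For the base case \rtit{Com}, $P = \prf{a}{P_1} \paral \prf{\bar a}{P_2}$ reduces to $P' = P_1 \paral P_2$. Reordering with \rtit{sCom} exhibits both $\inn{a,P}$ and $\outt{a,P}$, so $\snc{a,P}$ holds. For the name equation, $\Names(P) = \{a\} \cup \Names(P_1) \cup \Names(P_2)$ (the port of $\bar a$ being $a$) and $\Names(P') = \Names(P_1) \cup \Names(P_2)$, so it remains to show $a \notin \Names(P_1) \cup \Names(P_2)$. This is exactly where linearity is indispensable: in $P$ the port $a$ already occurs twice, once as the input prefix and once as the output prefix, so by the linear discipline it cannot occur again in either continuation. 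Hence the union with $\{a\}$ is disjoint and $\Names(P') \sqcup \{a\} = \Names(P)$.

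In the inductive case \rtit{Ctx}, $P = \ectx{P_0}$ and $P' = \ectx{P_0'}$ with $P_0 \reduc P_0'$; the induction hypothesis supplies a port $a$ with $\snc{a,P_0}$ and $\Names(P_0') \sqcup \{a\} = \Names(P_0)$. Since $\ectxN$ is a parallel context, $\inn{a,P_0}$ lifts to $\inn{a,P}$ and similarly for output, giving $\snc{a,P}$. Writing $\Names(P) = N \cup \Names(P_0)$ and $\Names(P') = N \cup \Names(P_0')$, I again invoke linearity: $a$ is used twice inside $P_0$, so it cannot appear among the context ports $N$; combined with $a \notin \Names(P_0')$ from the hypothesis, this yields $a \notin \Names(P')$ and the disjoint equation. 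The case \rtit{Str} is then immediate from the auxiliary facts: from $P \steq P_1 \reduc P_1' \steq P'$ the hypothesis gives a port $a$ for $P_1 \reduc P_1'$; preservation of $\Names$ under $\steq$ transports the name equation from $(P_1,P_1')$ to $(P,P')$, and stability of $\snc{\cdot}$ transports $\snc{a,P_1}$ to $\snc{a,P}$.

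The main obstacle, and really the only step that is not bookkeeping, is the disjointness claim $a \notin \Names(P')$, which fails without the linear typing assumption: in an unrestricted calculus the reacting port could recur in a continuation or in the surrounding context, breaking $\Names(P') \sqcup \{a\} = \Names(P)$. Everything else is a routine structural induction that threads $\snc{a,\cdot}$ and the name equation through evaluation contexts and structural congruence.
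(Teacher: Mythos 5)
Your proof is correct and follows exactly the route the paper takes: the paper's proof of this lemma is simply ``by induction on the rules of Figure~\ref{fig:language}; straightforward,'' and your structural induction on the derivation of $P \reduc P'$ (with the \rtit{Com}, \rtit{Ctx}, \rtit{Str} cases) is precisely that induction, carried out in full. Your identification of linearity as the ingredient that makes the union disjoint is the right observation and is indeed the only non-bookkeeping step the paper leaves implicit.
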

\begin{proof}
By induction on the rules of Figure~1; straightforward.
\end{proof}

\begin{corollary}\label{cor:sync}
If $P\rightarrow^* Q$ then the following  holds:
\begin{enumerate}
\item $\Names(Q)\subseteq\Names(P)$ 
\item if $\Names(P)\backslash\Names(Q) =\{a,\dots\}$ then there exists a $P_a$ and a $P'_a$ such that
$P\rightarrow^*P_a \rightarrow P'_a\rightarrow^*Q$ with 
$\Names(P'_a)\sqcup\{a\} =\Names(P_a)$ and  $ \snc{a,P_a}$. 
\end{enumerate}
\end{corollary}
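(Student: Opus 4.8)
The plan is to fix a concrete reduction chain witnessing $P \reducAst Q$, say $P = P_0 \reduc P_1 \reduc \cdots \reduc P_n = Q$, and to argue by induction on its length $n$, feeding each single step into Lemma~\ref{lem:sync}. The pivotal fact supplied by that lemma is that one reduction step consumes \emph{exactly one} name: for every $i$ there is a port $a_i$ with $\Names(P_{i+1}) \sqcup \{a_i\} = \Names(P_i)$ and $\snc{a_i, P_i}$. The disjoint-union form forces $a_i$ to be the unique element of $\Names(P_i) \setminus \Names(P_{i+1})$, so that names strictly decrease, one at a time, along the chain.

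Part~1 is then immediate: from $\Names(P_{i+1}) \subseteq \Names(P_i)$ for each $i$ and transitivity of $\subseteq$ we obtain $\Names(Q) = \Names(P_n) \subseteq \Names(P_0) = \Names(P)$, the base case $n = 0$ being trivial since there $P = Q$.

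For Part~2, I would locate the precise step at which a given $a \in \Names(P) \setminus \Names(Q)$ disappears. Since $a \in \Names(P_0)$ but $a \notin \Names(P_n)$, and the name sets are non-increasing along the chain, there is a least index $i$ with $a \in \Names(P_i)$ and $a \notin \Names(P_{i+1})$. Applying Lemma~\ref{lem:sync} to the step $P_i \reduc P_{i+1}$, the unique removed name is $a_i$, and because $a \in \Names(P_i) \setminus \Names(P_{i+1}) = \{a_i\}$ we conclude $a_i = a$. Setting $P_a = P_i$ and $P'_a = P_{i+1}$ yields the desired factorisation $P \reducAst P_a \reduc P'_a \reducAst Q$ together with $\Names(P'_a) \sqcup \{a\} = \Names(P_a)$ and $\snc{a, P_a}$.

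The only delicate point — and the step I would treat most carefully — is the claim that, once removed, $a$ cannot reappear, which is exactly what makes the index $i$ well defined and the removed name unique. This is where the per-step disjoint-union shape of Lemma~\ref{lem:sync} (and, beneath it, the linear-use discipline permitting at most one input and one output on $a$) is essential: the monotone decrease of $\Names$ under $\reduc$ guarantees that the membership of $a$ flips exactly once from present to absent. Granting that, Part~2 follows simply by reading off the witnessing step, so I expect the overall argument to be short once the monotonicity observation is phrased cleanly.
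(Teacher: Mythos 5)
Your proof is correct and follows exactly the route the paper intends: the paper states this as an immediate corollary of Lemma~\ref{lem:sync} (giving no explicit proof), and your unfolding of the reduction chain, with monotone non-increase of $\Names$ forcing the disappearing step to be unique and its removed name to coincide with $a$, is precisely that intended derivation. Nothing is missing; the ``delicate point'' you flag is indeed settled by the per-step disjoint-union equation, as you observe.
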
 
 
\begin{lemma}\label{lem:cin-persistent}
If $\cinn{a,P}$ ($\coutt{a,P}$) and $P\rightarrow^* Q$ then exactly one of the following cases holds:
\begin{enumerate}
\item $\cinn{a,Q}$ ($\coutt{a,Q}$) 
\item $a\not\in\Names(Q)$ and  there exists an $R_a$ and an $R'_a$ such that $P\rightarrow^* R_a\rightarrow R'_a\rightarrow^* Q$ and
$\Names(R'_a)\sqcup\{a\} =\Names(R_a)$ and  $ \snc{a,R_a}$. 
\end{enumerate}
\end{lemma}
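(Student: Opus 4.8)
The plan is to reduce the statement to the already-available Lemma~\ref{lem:sync} and Corollary~\ref{cor:sync} by a case analysis on whether the port $a$ survives the reduction, i.e. on whether $a\in\Names(Q)$. I argue the \cinn\ case; the \coutt\ case is entirely symmetric, obtained by swapping the roles of input and output throughout.

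First I would record two facts used repeatedly. Since \cinn{a,P} exhibits an input prefix on $a$ inside $P$, it gives $a\in\Names(P)$; likewise \cinn{a,Q} forces $a\in\Names(Q)$. This already settles the ``at most one'' part of the claim: case~1 implies $a\in\Names(Q)$ while case~2 explicitly asks $a\notin\Names(Q)$, so the two alternatives are incompatible. Moreover Corollary~\ref{cor:sync}(1) gives $\Names(Q)\subseteq\Names(P)$, so $a$ either persists in $Q$ or is dropped somewhere along $P\reducAst Q$.

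For the ``at least one'' part I split on this dichotomy. If $a\notin\Names(Q)$ then $a\in\Names(P)\setminus\Names(Q)$, and Corollary~\ref{cor:sync}(2) directly delivers processes $R_a,R'_a$ with $P\reducAst R_a\reduc R'_a\reducAst Q$, $\Names(R'_a)\sqcup\{a\}=\Names(R_a)$ and \snc{a,R_a}, which is precisely case~2 (after renaming the $P_a,P'_a$ of the corollary to $R_a,R'_a$). If instead $a\in\Names(Q)$, I must prove case~1, namely \cinn{a,Q}. I would do this by induction on the length of $P\reducAst Q$, relying on a single-step observation: every reduction is an instance of rule \rtit{Com} under an evaluation context, and since the grammar of $\ectxN$ admits no prefixes, each step has the shape $P\equiv\ectx{\prf{c}{P''}\paral\prf{\bar{c}}{Q''}}\reduc\ectx{P''\paral Q''}$. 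Such a step removes exactly the matched pair of prefixes on the synchronised port $c$ and keeps every other prefix, since both continuations $P''$ and $Q''$ are retained and the structural rules discard no prefix. By Lemma~\ref{lem:sync} this $c$ is the unique port dropped at that step, and by linearity synchronising on it consumes the only input and the only output on $c$. Hence a step synchronises on $a$ iff it drops $a$, and by Corollary~\ref{cor:sync}(1) a dropped port never reappears; since $a\in\Names(Q)$, no step synchronises on $a$. Consequently the (linearity-)unique input prefix on $a$ is never the consumed prefix and therefore survives each step --- possibly only changing its context depth, e.g. becoming exposed when an enclosing prefix is consumed --- so the input on $a$ persists all the way to $Q$, giving \cinn{a,Q}.

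The main obstacle is exactly this persistence step: one must argue, modulo structural congruence and the possible relocation of the $a$-prefix across context layers, that a synchronisation on a port $c\neq a$ neither deletes nor blocks the input on $a$. The cleanest way to discharge it is the single-step shape above together with linearity, which makes ``the input on $a$ was consumed'' equivalent to ``the step synchronised on $a$'' --- exactly the situation excluded by $a\in\Names(Q)$. Everything else then follows mechanically from Lemma~\ref{lem:sync} and Corollary~\ref{cor:sync}.
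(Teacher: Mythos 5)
Your proposal is correct and follows essentially the same route as the paper's own proof: the same dichotomy on $a\in\Names(Q)$, with Corollary~\ref{cor:sync} disposing of the case $a\notin\Names(Q)$, and an induction on the length of the reduction sequence using Lemma~\ref{lem:sync} to show that every step synchronises on a name other than $a$, so the input prefix on $a$ persists. The only difference is one of detail: you spell out the persistence step (via the shape of \rtit{Com}-steps under evaluation contexts and linearity) and the mutual exclusivity of the two cases, both of which the paper leaves implicit.
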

\begin{proof}
We have two cases corresponding to (i) $a\in\Names(Q)$ or   (ii) $a\not\in\Names(Q)$.
In case (i), assume $P\rightarrow  P_1\rightarrow_1 \cdots \rightarrow_n Q_n\rightarrow Q$. We 
proceed by induction on~$n$.  From Lemma~\ref{lem:sync} we know that there exists $b$ such that 
$\Names(P_1)\sqcup\{b\} =\Names(P)$ and  $ \snc{b,P}$. Since $\Names(Q)\subset\Names(P_1)
$, we infer $a\in\Names(P_1)$ and in turn $a\ne b$. From this and $\cinn{a,P}$ we deduce that
$\cinn{a,P'}$. Now assume that $\cinn{a,Q_n}$. From Lemma~1 we deduce   
$\Names(Q)\sqcup\{c\} =\Names(Q_n)$ for some~$c\ne a$: thus $\cinn{a,Q}$. 
The case  $\coutt{a,P}$  is analogous.
Case (ii) is a direct consequence of Corollary~\ref{cor:sync}.

\end{proof}
    
\begin{lemma}\label{lem:cmp-persistent}
If $P\in \CMP$ and $P\rightarrow^* Q$ then  $Q\in\CMP$.
\end{lemma}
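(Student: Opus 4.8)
The plan is to reduce $\complete{Q}$ to a check that is local to each name. By \defref{def:complete}, $Q\in\CMP$ holds exactly when, for every port $a$, we have $\cinn{a,Q}$ if and only if $\coutt{a,Q}$. So I would fix an arbitrary $a$ and prove this biconditional, splitting according to whether $a\in\Names(Q)$. The case $a\notin\Names(Q)$ is immediate: an occurrence of $a$ (as an input or an output, possibly under a context) would place $a$ in $\Names(Q)$, so neither $\cinn{a,Q}$ nor $\coutt{a,Q}$ can hold and the biconditional is vacuously true. This is also the situation described by the second alternative of Lemma~\ref{lem:cin-persistent}: once $a$ is consumed by a synchronisation it leaves $\Names$ together with its dual, by linearity.

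The substantive case is $a\in\Names(Q)$. The key supporting observation I would isolate is that \emph{in a complete process, every name that occurs is present with both capabilities}: if $a\in\Names(P)$ then $a$ appears as some prefix, so at least one of $\cinn{a,P}$, $\coutt{a,P}$ holds, and completeness of $P$ upgrades ``at least one'' to ``both''. I would record this as a small lemma, together with the dual direction $\Names(P) = \sset{a \mid \cinn{a,P} \text{ or } \coutt{a,P}}$, both by structural induction on $P$ (and noting that the structural rules \rtit{sNil}, \rtit{sCom}, \rtit{sAss} preserve $\Names$, so $\cinn{a,-}$ and $\coutt{a,-}$ are well defined on $\equiv$-classes).

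With this in hand the argument is short. Assume $a\in\Names(Q)$. By Corollary~\ref{cor:sync}(1), $\Names(Q)\subseteq\Names(P)$, so $a\in\Names(P)$; the observation above then gives \emph{both} $\cinn{a,P}$ and $\coutt{a,P}$. Now I would push each of these forward along $P\rightarrow^* Q$ using Lemma~\ref{lem:cin-persistent} and its parenthesised output variant: $\cinn{a,P}$ yields $\cinn{a,Q}$ or $a\notin\Names(Q)$, and $\coutt{a,P}$ yields $\coutt{a,Q}$ or $a\notin\Names(Q)$. Since $a\in\Names(Q)$ by assumption, the ``disappearing'' alternative is excluded in both, so $\cinn{a,Q}$ and $\coutt{a,Q}$ both hold and the biconditional is satisfied (both sides true). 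This establishes $\complete{Q}$. Note that this route needs no induction on the length of the whole reduction, nor any ``backward'' persistence statement, because the multi-step Lemma~\ref{lem:cin-persistent} already transports occurrences from $P$ to $Q$.

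I expect the only genuinely delicate point to be the interface between the \emph{syntactic} set $\Names$ used in the auxiliary lemmas and the \emph{up-to-$\equiv$} predicates $\cinn{a,-}$, $\coutt{a,-}$ of \defref{def:complete}, since the whole argument pivots on the correspondence $a\in\Names(P)$ iff $\cinn{a,P}$ or $\coutt{a,P}$. The other place where care is required is the appeal to linearity inside Lemma~\ref{lem:cin-persistent}: it is linearity that guarantees the dichotomy is clean, namely that a synchronised name loses \emph{both} its unique input and its unique output simultaneously and cannot survive in $Q$ with a single dangling capability. Everything else is routine bookkeeping over the reduction rule \rtit{Com}.
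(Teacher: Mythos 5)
The paper states Lemma~\ref{lem:cmp-persistent} \emph{without any proof} (it is asserted bare in the appendix, like Lemma~\ref{lem:finite-reduction}), so there is no official argument to compare yours against; the question is only whether your proof stands on its own, and it does. The per-name case split on $a\in\Names(Q)$ is sound. Your pivotal correspondence --- $a\in\Names(R)$ iff ($\cinn{a,R}$ or $\coutt{a,R}$) --- holds because process contexts $\ctxN$ allow the hole under both prefixes and parallel composition, so every syntactic occurrence of a prefix on $a$ can be exposed as $R=\ctx{S}$ with $\inn{a,S}$ or $\outt{a,S}$, and because structural congruence preserves $\Names$ (the axioms \rtit{sNil}, \rtit{sCom}, \rtit{sAss} neither create nor destroy prefixes). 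Completeness of $P$ then upgrades ``at least one capability'' to ``both'', Corollary~\ref{cor:sync}(1) pulls $a$ back into $\Names(P)$, and the multi-step Lemma~\ref{lem:cin-persistent} (in its input and its parenthesised output form) transports both capabilities to $Q$; its second alternative is excluded by your standing assumption $a\in\Names(Q)$, and the case $a\notin\Names(Q)$ makes the biconditional hold with both sides false. You are also right that no induction on the reduction length is needed at this level, since Lemma~\ref{lem:cin-persistent} is already stated for $\reducAst$.

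One caveat worth making explicit: by \defref{def:complete}, $\CMP$ is a subset of $\Proc$, so $Q\in\CMP$ requires not only $\complete{Q}$ (which you prove) but also $Q\in\Proc$, i.e., that well-typedness under the linear discipline is preserved by reduction. This is subject reduction for the type system of \secref{sec:language}, which the paper assumes throughout but never proves --- indeed all the auxiliary lemmas you invoke already presuppose that reducts remain linear. A single sentence appealing to this (or flagging it as an assumption, as the paper implicitly does) would close the argument completely.
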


\begin{lemma}\label{lem:finite-reduction}
For any~$P$ there exists~$Q$ such that $P\rightarrow^* Q$ and $Q\not\rightarrow$.
\end{lemma}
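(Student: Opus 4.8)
The plan is to exhibit a strictly decreasing termination measure and appeal to the well-foundedness of the natural numbers. The natural candidate is the cardinality of the name set, $|\Names(P)|$, which is a well-defined finite number for every $P$, since the processes of Figure~\ref{fig:language} are finite syntactic objects (no recursion, choice, or name scoping). In fact this argument will establish the stronger \emph{strong normalisation} property --- that \emph{every} reduction sequence from $P$ terminates --- from which the existence of a single reachable normal form $Q$ follows immediately.

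The key step is Lemma~\ref{lem:sync}, which already supplies exactly what is needed: whenever $P \rightarrow P'$ there is a port $a$ with $\Names(P') \sqcup \{a\} = \Names(P)$. Since $\sqcup$ is disjoint union, this gives $|\Names(P')| = |\Names(P)| - 1$, so each individual reduction step strictly decreases the measure by exactly one. Before invoking this I would record the auxiliary observation that $|\Names(\cdot)|$ is invariant under structural congruence: the rules \rtit{sCom} and \rtit{sAss} merely rearrange parallel components, and \rtit{sNil} adds or removes an $\inert$ subterm, which carries no names; hence $P \steq P'$ implies $\Names(P) = \Names(P')$. This guarantees that the measure is not disturbed by the ``up to $\steq$'' rule \rtit{Str}, so the strict decrease holds for the full reduction relation of Figure~\ref{fig:language}.

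The conclusion then follows by a greedy construction. If $P \not\rightarrow$, take $Q = P$. Otherwise pick any $P \rightarrow P_1$, then any $P_1 \rightarrow P_2$, and continue. Because the measure $|\Names(\cdot)|$ strictly decreases at each step and is bounded below by $0$, any such sequence has length at most $|\Names(P)|$ and must terminate at some $Q$ with $P \rightarrow^* Q$ and $Q \not\rightarrow$, which witnesses the claim.

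The main obstacle is essentially absent once Lemma~\ref{lem:sync} is in hand: the whole argument reduces to identifying the decreasing measure and verifying its invariance under $\steq$, both of which are routine. The only point that deserves explicit care is precisely that invariance, ensuring that the structural rule \rtit{Str} cannot reintroduce names and thereby break the monotonic count.
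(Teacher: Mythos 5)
Your proof is correct, and it supplies exactly the argument the paper leaves implicit: the paper states Lemma~\ref{lem:finite-reduction} \emph{without any proof}, and your termination measure $|\Names(P)|$, strictly decreasing at each step by Lemma~\ref{lem:sync} (with finiteness of $\Names(P)$ and invariance under $\steq$), is precisely the counting that already underlies the paper's Corollary~\ref{cor:sync}. Nothing in your argument uses machinery beyond what the paper establishes, so it correctly fills this omission.
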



\medskip\noindent{\bf Proof of Theorem~\ref{thm:PSL-eq-CMP-less-LF}}.
To show the right to the left direction, assume $P\in \CMP \setminus \LF$. By definition:
\[
\CMP\setminus LF  \deftxt
\{P\in \CMP \ |  \ \exists (Q,a)\ . \  P\rightarrow^* Q \ \land \ \wt{a,Q} \Rightarrow\forall R\  . \  Q\rightarrow^* R  
\Rightarrow \neg\snc{a,R}\}
\] 
Let $Q_a$ be a distinctive redex of $P$: thus $\inn{a,Q_a} \;\text{ exor }\;\outt{a,Q_a}$.
Assume $\inn{a,Q_a}$ and consider $R_{\textbf{stop}}$ such that 
$Q_a\rightarrow^*R_{\textbf{stop}}\not\rightarrow$, which does exists by Lemma~\ref{lem:finite-reduction}.
By Lemma~\ref{lem:cin-persistent} we know that $\inn{a, R_{\textbf{stop}}}$: from  $\neg\snc{a,R_{\textbf{stop}}}$ we 
infer  $\neg\outt{a,R_{\textbf{stop}}}$.
From Lemma~\ref{lem:cmp-persistent} we infer $R_{\textbf{stop}}\in\CMP$: thus $\coutt{a,R_{\textbf{stop}}}$.
Therefore $\dl{R_{\textbf{stop}}}$ and $\tcomplete{R_{\textbf{stop}}}$, as required.
The case  $\outt{a,Q_a}$ is analogous.

\smallskip
To see the  left to the right direction, assume that $P\in\CMP$ and that 
$P\reducAst \ectx{Q}$ with $\dl{Q} $ and  $\tcomplete{Q}$. Note that this excludes the case $Q
\equiv\inert$: therefore $\Names(Q)\ne\emptyset$,  and in turn $\Names(P)\ne\emptyset$, 
because of Corollary~\ref{cor:sync}. From $\Names(Q)\ne\emptyset$ and the rules of structural congruence we 
infer that there is $a\in\Names(Q)$  such that (i) $Q\equiv a.Q'\paralS Q''$ or (ii) $Q\equiv \overline 
a.Q'\paralS Q''$. In case (i) we infer $\inn{a,Q}$; from $Q\not\rightarrow$ we deduce $\neg
\outt{a,Q}$; in case (ii) we infer $\outt{a,Q}$ ; from $Q\not\rightarrow$ we deduce $\neg\inn{a,Q}$.
In both cases we infer $\wt{a,Q}$, and in turn $\neg\snc{a,Q}$ which completes the proof since  $Q$ 
has no redexes: that is,  $P\in\CMP\setminus LF $.
\qed

\end{document}